\title{Lower Bounds on the mim-width of Some Graph Classes}
\author{Stefan Mengel\\CNRS CRIL UMR 8188}
\begin{document}
\maketitle

\begin{abstract}
mim-width is a recent graph width measure that has seen applications in graph algorithms and problems related to propositional satisfiability.
In this paper, we show linear lower bounds for the mim-width of strongly chordal split graphs, co-comparability graphs and circle graphs. This improves and refines lower bounds that were known before, some of them only conditionally. In the case of strongly chordal graphs not even a conditional lower bound was known before. All of the bounds given are optimal up to constants.
\end{abstract}

\newtheorem{theorem}{Theorem}
\newtheorem{lemma}[theorem]{Lemma}
\newtheorem{observation}[theorem]{Observation}

\newcommand{\mimw}{\mathsf{mimw}}
\newcommand{\tw}{\mathsf{tw}}

\section{Introduction}

The study of restricted graph classes like chordal graphs, interval graphs and circle graphs is a classical field in graph theory. In particular, these classes have also been studied in the field of graph algorithms because many classical problems that are $\mathsf{NP}$-complete on general graphs are tractable for these classes, see~\cite{isgci} for an overview and extensive references. 

One way of explaining favorable algorithmic properties of certain graph classes is showing that on these classes width measures that can be used for algorithm design are bounded. One such parameter is \emph{mim-width}, short for maximum induced matching width, which has been explicitly defined by Vatshelle in his thesis~\cite{vatshelleThesis} but was implicitly already used by Belmonte and Vatshelle in~\cite{BelmonteV13}. mim-width strictly generalizes many other graph parameters such as treewidth and cliquewidth but still allows efficient algorithms for many graph problems. For example, Belmonte and Vatshelle showed that interval graphs and many other classes have constant mim-width which results directly in polynomial time algorithms for Maximum Independent Set, Minimum Dominating Set and many other problems. Note that these graphs have unbounded cliquewidth, so these algorithms rely on the enhanced power of mim-width.

Another field in which mim-width has had quite some impact are problems related to propositional satisfiability, in particular model counting and knowledge compilation. Here, mim-width cannot only be used for upper bounds, using also his close cousin ps-width~\cite{SaetherTV15,BovaCMS15,GaspersPST16}, but it is also an important ingredient (sometimes slightly modified) when showing lower bounds~\cite{Razgon14,BovaCMS14}.  

It is known that mim-width cannot be used to design algorithms for some well-known graph classes, e.g.~chordal graphs. This is because having small width and being able to compute the relevant decomposition tree would result in efficient algorithms on these graphs for problems that remain $\mathsf{NP}$-complete on them. Vatshelle in his thesis~\cite{vatshelleThesis} gave several classes for which this is the case. Consequently, either the mim-width of these graph classes is generally high or it is small but computing the witnessing decompositions is intractable. In this paper, we show that for all of the classes that Vatshelle gave, the former is the case. More specifically, we show that the class of strongly chordal split graphs, co-comparability graphs and circle graphs generally all have mim-width  bounded from below linearly in the number of vertices. Up to constant factors, these bounds are optimal. Note that in the case of strongly chordal graphs not even conditional lower bounds were known. In fact, Vatshelle in his thesis~\cite{vatshelleThesis} asks if the mim-width of this graph class  if bounded. Our lower bound answers this question negatively.



We remark that very recently Kang et al.~\cite{KangKST16} showed a mim-width lower bound of $\sqrt{\log_2(n/2)}$ for split graphs and of $\sqrt{n/12}$ for co-comparability graphs. Our bounds here beat these bounds quantitatively and, in the case of split graphs, are true for more restrictive graph classes.
We also remark that our work heavily relies on the mim-width lower bound in~\cite{Brault-BaronCM15} for chordal bipartite graphs.

\section{Preliminaries}

We use standard notation from graph theory which can be found in e.g.the textbook~\cite{Diestel}. Let $G=(V,E)$ be a graph. For every set $A\subseteq V$ we define $\bar A := V\setminus A$. Moreover, we define $G[A,\bar A]$ to be the bipartite graph with color classes $A$ and $\bar A$ that contains exactly the edges in $E$ that have endpoints in both $A$ and $\bar A$.

A \emph{branch decomposition} of a graph $G=(V,E)$ is a rooted binary tree~$T$ whose leaves are the vertices in $V$. There are several graph width measures defined by branch decompositions; here we will only be interested in \emph{mim-width} which is defined as follows: Every node $t$ of the branch decomposition~$T$ of $G$ defines a partition $A, \bar A$ of $V$ where $A$ contains exactly the vertices appearing as leaves in the subtree $T_t$ of $T$ rooted in $t$. The mim-value of the partition $A,\bar A$ is the size of a maximum induced matching in $G[A,\bar A]$. The mim-width of the decomposition $T$ is the maximum mim-value of the partitions defined by the nodes $t$ of $T$. Finally, the mim-width of $G$, denoted by $\mimw(G)$, is the minimum mim-width taken over all branch decompositions of $G$.

Let $C$ be a cycle in a graph. We say that $C$ has a \emph{chord} if there is an edge between two vertices of $C$ that are not adjacent on $C$. A graph is called \emph{chordal bipartite} if it is bipartite and every cycle of length at least $6$ has a chord. Chordal bipartite graphs are known to have high mim-width.

\begin{theorem}[\cite{Brault-BaronCM15}]\label{thm:chordalbip}
 There is an infinite family $\mathcal{G}_{cb}$ of chordal bipartite graphs and a constant $c$ such that for every graph $G\in \mathcal G_{cb}$ we have $\mimw(G)\ge c|V(G)|$.
\end{theorem}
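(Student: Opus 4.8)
The plan is to prove the statement in two stages: first a standard balancing argument that lets us focus on a single ``balanced'' cut of any branch decomposition, and then a combinatorial lower bound showing that for a suitably chosen family the graph across any such balanced partition contains a large induced matching.

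For the balancing stage I would use the elementary fact that in any rooted binary tree with $n$ leaves there is a node $t$ whose subtree contains between $n/3$ and $2n/3$ of the leaves. Applied to a branch decomposition of $G$ on $n=|V(G)|$ vertices, this node yields a partition $(A,\bar A)$ with $|A|,|\bar A|\ge n/3$, and the mim-width of the decomposition is at least the size of a maximum induced matching of $G[A,\bar A]$. Hence it suffices to exhibit an infinite family $\mathcal G_{cb}$ of chordal bipartite graphs and a constant $c$ such that \emph{every} partition $(A,\bar A)$ with $|A|,|\bar A|\ge n/3$ satisfies $\mathrm{mim}(G[A,\bar A])\ge c\,n$.

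For the combinatorial stage I would work through the biadjacency matrix. Writing $G=(X\cup Y,E)$ with biadjacency matrix $M$, an induced matching of size $k$ across a cut is exactly a $k\times k$ permutation submatrix of $M$ all of whose other entries in the chosen rows and columns vanish, i.e.\ $k$ edges that pairwise do not ``see'' each other. The goal thus becomes: design a bipartite graph whose adjacency pattern is rich enough that any balanced split of $X\cup Y$ — even one distributing vertices of the same colour class to both sides — still exposes a linear-size such permutation submatrix. I would search for the family among totally balanced $0/1$ matrices, since these are precisely the biadjacency matrices of chordal bipartite graphs and can be brought to a ``staircase'' ($\Gamma$-free) normal form; a natural attempt is a recursive doubling construction in which each level forces a constant fraction of fresh mutually independent cut edges, so that induced matchings accumulate linearly in $n$.

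The hard part will be the tension between the two requirements. Forcing a large induced matching in \emph{every} balanced cut is a strong ``expansion-like'' density condition, whereas chordal bipartiteness (equivalently, the $\Gamma$-free staircase structure of $M$) is a strong structural constraint, and the half-graph example — whose maximum induced matching is of size one — shows that naive dense staircases collapse entirely. Reconciling these, and in particular controlling cuts that mix the two colour classes where one must argue that enough matched pairs survive without a ``crossing'' edge destroying independence, is the crux; this is where I expect a careful recursive or design-theoretic construction, combined with an averaging argument over how a given cut meets each level of the recursion, to be needed.
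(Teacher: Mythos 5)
There is a genuine gap: your proposal is a plan, not a proof. The first stage (the standard fact that every rooted binary tree on $n$ leaves has a node whose subtree contains between $n/3$ and $2n/3$ of them, so it suffices to exhibit a family in which \emph{every} balanced partition $(A,\bar A)$ induces a cut graph $G[A,\bar A]$ with a linear-size induced matching) is correct and is indeed the standard way such bounds are organized. But the entire content of the theorem lies in the second stage, and there you never actually produce anything: no concrete family $\mathcal G_{cb}$ is defined, and no argument is given that balanced cuts of it expose linear-size induced matchings. You correctly diagnose the central tension --- chordal bipartiteness (equivalently, a totally balanced / $\Gamma$-free biadjacency matrix) is exactly the kind of structure that tends to kill induced matchings, as your half-graph example shows --- but identifying the crux and saying that ``a careful recursive or design-theoretic construction'' should resolve it is precisely the step that constitutes the theorem. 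As written, the proposal proves nothing beyond the routine reduction to balanced cuts.

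For comparison: the paper itself does not prove this statement either; it imports it wholesale from \cite{Brault-BaronCM15}, where an explicit family of chordal bipartite graphs is constructed and the linear mim-width bound is established by a careful analysis of cuts. So your proposal cannot be matched against an in-paper argument, but judged on its own it is incomplete in exactly the place where all the difficulty resides. If you want to pursue your outline, the work to be done is to pin down a totally balanced construction and prove the cut property for it, including for cuts that split both color classes (where, as you note, an induced matching in $G[A,\bar A]$ may mix edges of both ``directions'' and independence is only required with respect to edges crossing the cut); none of that is yet on the page.
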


\section{Adding Edges to Bipartite Graphs}

In this section we will prove the main technical lemma which we will use to prove lower bounds for restricted graph classes later on. Intuitively, it says that when starting with a bipartite graph, adding edges inside the color classes does not decrease the mim-width by much.

\begin{lemma}\label{lem:addedgeslower}
 Let $G$ be a bipartite graph with color classes $X$ and $Y$. Let moreover $G'$ be a graph that we get from $G$ by adding some edges whose respective end vertices are either both in $X$ or both in $Y$. Then \[\mimw(G') \ge \frac{1}{2}\mimw(G).\]
\end{lemma}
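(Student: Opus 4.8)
The plan is to take an optimal branch decomposition $T'$ of $G'$ and use the very same tree $T'$ as a branch decomposition of $G$, arguing that its mim-width with respect to $G$ is at most twice its mim-width with respect to $G'$. Since $G$ and $G'$ share the same vertex set, a branch decomposition of one is literally a branch decomposition of the other, so no surgery on the tree is needed. Every node $t$ of $T'$ induces the same partition $A,\bar A$ of $V = X\cup Y$ regardless of which graph we view it in; the only thing that changes is the cut graph, namely $G[A,\bar A]$ versus $G'[A,\bar A]$.

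The key observation is that the edges of $G$ are a subset of the edges of $G'$, and the edges we removed to pass from $G'$ back to $G$ all lie inside $X$ or inside $Y$. So fix a node $t$ with partition $A,\bar A$, and let $M$ be a maximum induced matching in $G[A,\bar A]$; I want to bound $|M|$ in terms of the mim-value of the same cut in $G'$. Every edge of $M$ is also an edge of $G'$, so $M$ is a matching in $G'[A,\bar A]$, but it need not be \emph{induced} there: the extra edges of $G'$ inside the color classes may create unwanted adjacencies between endpoints of distinct edges of $M$. The idea is to split $M$ according to the bipartition. Write each edge of $M$ as having one endpoint in $X$ and one in $Y$ (this holds because $M\subseteq E(G)$ and $G$ is bipartite). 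Now partition $M$ based on which side of the cut the $X$-endpoint falls: let $M_X$ be the edges whose $X$-endpoint lies in $A$ (equivalently whose $Y$-endpoint lies in $\bar A$), and $M_Y$ the edges whose $X$-endpoint lies in $\bar A$.

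The crucial claim is that each of $M_X$ and $M_Y$ is an \emph{induced} matching in $G'[A,\bar A]$. Consider two edges $e,f\in M_X$, with $e=x_e y_e$ and $f=x_f y_f$ where $x_e,x_f\in A\cap X$ and $y_e,y_f\in \bar A\cap Y$. Any edge of $G'[A,\bar A]$ between endpoints of $e$ and $f$ would have to join the $A$-side to the $\bar A$-side, so it joins some $x\in\{x_e,x_f\}$ to some $y\in\{y_e,y_f\}$ — that is, it joins an $X$-vertex to a $Y$-vertex. But the edges added in forming $G'$ all sit inside a single color class, so every $X$-to-$Y$ edge of $G'$ is already an edge of $G$. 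Hence such an adjacency would already have existed in $G[A,\bar A]$, contradicting that $M$ is an induced matching there (unless it is one of $e,f$ themselves). This shows $M_X$ is induced in $G'[A,\bar A]$, and symmetrically for $M_Y$. Since $M = M_X\cup M_Y$, we get $|M|\le |M_X|+|M_Y|\le 2\cdot\mathrm{mimvalue}_{G'}(A,\bar A)$, and taking the maximum over nodes $t$ gives that the mim-width of $T'$ on $G$ is at most twice its mim-width on $G'$, which equals $2\,\mimw(G')$. Rearranging yields $\mimw(G')\ge \tfrac12\mimw(G)$.

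The main obstacle I anticipate is pinning down exactly why the added edges cannot spoil the induced property within a single part $M_X$: the argument rests entirely on the fact that cross-partition adjacencies in $G'[A,\bar A]$ are all of $X$-$Y$ type and therefore inherited from $G$, whereas the only \emph{new} edges of $G'$ are $X$-$X$ or $Y$-$Y$ and thus never appear in any bipartite cut graph between the two color classes at all. Once that is stated cleanly, the rest is bookkeeping: checking that every edge of $M$ genuinely crosses $A,\bar A$ in the $X$-$Y$ orientation, and that the two-way split loses at most a factor of two.
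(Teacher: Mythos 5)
Your proof is correct and follows essentially the same route as the paper: both arguments rest on the observation that every edge of $G'[A,\bar A]$ joining an $X$-vertex to a $Y$-vertex is already an edge of $G$, and both split an induced matching of $G[A,\bar A]$ according to which side of the cut the $X$-endpoints lie on. The only cosmetic difference is that the paper keeps the larger of the two parts (pigeonhole, size at least $|M|/2$) and shows it stays induced in $G'[A,\bar A]$, whereas you show both parts stay induced and bound $|M|\le |M_X|+|M_Y|$ --- the same inequality read in the opposite direction.
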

\begin{proof}
Note first that any branch decomposition of $G$ is a branch decomposition of $G'$ and vice versa. Let $T$ be any such branch
decomposition. Then we know that there is a node $t$ in $T$ that defines a partition $(A,\bar{A})$ of the vertices such that  $G[A, \bar{A}]$ has an induced matching $M$ of size $\mimw(G)$. Then $M$ has a sub-matching $M'$ such that
\begin{itemize}
\item the end vertices of $X$ appearing as end vertices in $M'$, call these $X'$, are all in one color class in $G[A,\bar{A}]$, and
\item $|M'| \ge |M|/2 \ge \mimw(G)/2$.
\end{itemize}
Then the vertices $Y$ appearing as end vertices in $M$ are all in the opposite color class of $G[A,\bar{A}]$ as the vertices in $X'$. Consequently, none of the edges added in the construction of $G'$ connect any vertices in $M'$ in $G'[A, \bar A]$. Thus~$M'$ is an induced matching also in $G'[A, \bar A]$ of size at least $\mimw(G)/2$. Since we can find such an $M'$ for every branch decomposition $T$, the claim follows.
\end{proof}

Let us sketch how we can use Lemma~\ref{lem:addedgeslower} to show lower bounds. The idea is to start with a class of bipartite graphs for which we already have a lower bound such as that of Theorem~\ref{thm:chordalbip}. We then construct new graphs by adding edges to the graphs in the class we started from. These edges are added only inside the respective color classes and in such a way that the resulting graphs lie in our target class. Then invoking Lemma~\ref{lem:addedgeslower} tells us that the mim-width has not decreased much in this transformation which completes the proof. 

As an example, we could start with the class of grid graphs which are bipartite and known to have a mim-width of $\Theta(\sqrt{n})$ where $n$ is the number of vertices~\cite[Theorem 4.3.10]{vatshelleThesis}. Then adding all potential edges inside one of the respective color classes yields split graphs (see Section~\ref{sct:split}). With Lemma~\ref{lem:addedgeslower} this yields a lower bound of $\Omega(\sqrt{n})$, beating the $\sqrt{\log_2(n/2)}$ lower bound of~\cite{KangKST16} exponentially.

Our refined lower bounds take a little more care but are shown with essentially the same argument.

\section{Strongly Chordal Split Graphs}\label{sct:split}

A cycle in a graph is called \emph{even} if it contains an even number of vertices. An odd chord in a cycle is a chord that connects two vertices that have an odd distance along the cycle. A graph is defined to be \emph{chordal} if every cycle of length at least $4$ has a chord. A graph is called \emph{strongly chordal} if it is chordal and every even cycle of length at least $6$ in it has an odd chord. A graph is called a \emph{split graph} if its vertices can be partitioned into two sets in such a way that they induce a clique and an independent set, respectively. Finally, a graph is called a \emph{strongly chordal split graph} if it is both strongly chordal and a split graph.

Let $G$ be a bipartite graph with color classes $X$ and $Y$. Construct $G'$ by adding to $G$ an edge between every pair of vertices in $Y$.

\begin{lemma}\label{lem:stronglychordalsplit}
 If $G$ is chordal bipartite, then $G'$ is a strongly chordal split graph.
\end{lemma}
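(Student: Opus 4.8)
The plan is to verify the two defining properties of a strongly chordal split graph separately. The split structure is immediate: by construction the set $Y$ induces a clique in $G'$, while $X$ stays independent because $G$ is bipartite with color classes $X,Y$ and every new edge lies inside $Y$. Hence $V(G')=X\cup Y$ exhibits $G'$ as a split graph, and since every split graph is chordal, $G'$ is automatically chordal. It therefore remains to establish the odd-chord condition: every even cycle of length at least $6$ in $G'$ has an odd chord. Note that chordal bipartiteness of $G$ has not been used yet, and indeed it is essential only here: if $G$ were merely bipartite and contained an induced $6$-cycle, then $G'$ would contain an induced $3$-sun, which is not strongly chordal.

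Fix such a cycle $C=v_0v_1\cdots v_{2k-1}$ with $2k\ge 6$ and record the parity of each position around $C$. Since the two arcs between positions $i$ and $j$ have lengths summing to the even number $2k$, they have equal parity, so the along-cycle distance of $v_i$ and $v_j$ is odd exactly when $i$ and $j$ have different parities; thus a chord is \emph{odd} precisely when its endpoints sit at positions of opposite parity. Because $X$ is independent, no two consecutive vertices of $C$ both lie in $X$, so every $X$-vertex of $C$ is flanked by two $Y$-vertices. The key dichotomy I would use is whether the $Y$-vertices of $C$ occupy positions of both parities.

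In the first case some $Y$-vertex sits at an even position, some $Y$-vertex sits at an odd position, and two such opposite-parity $Y$-vertices can be chosen non-adjacent on $C$. As $Y$ induces a clique, the edge between them is present in $G'$ and, having endpoints of opposite parity, is an odd chord. The only alternative is that opposite parities occur but every opposite-parity pair of $Y$-vertices is consecutive on $C$; here a short counting argument, using that each cycle vertex has only two neighbors on $C$, forces the number of even-position and odd-position $Y$-vertices to be at most two each, and each of the few resulting configurations either makes $C$ have length $4$ or creates a run of at least two consecutive $X$-vertices, both impossible. In the second case all $Y$-vertices of $C$ share one parity, say even. Then every odd position carries an $X$-vertex, and since two $X$-vertices cannot be consecutive, every even position must then be a $Y$-vertex; so $C$ strictly alternates between $Y$ and $X$ and uses only edges between $X$ and $Y$. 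These are exactly the edges of $G$, so $C$ is a cycle of $G$ of length at least $6$. Now the defining property of chordal bipartite graphs supplies a chord of $C$ inside $G$, and since $G$ is bipartite this chord joins an $X$-vertex (odd position) to a $Y$-vertex (even position); it is therefore an odd chord, and it survives in $G'\supseteq G$.

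The main obstacle is the strong-chordality part, and specifically pinpointing where the hypothesis on $G$ is really consumed: the non-alternating cycles are handled purely by parity together with the fact that $Y$ is a clique, but the alternating even cycles of length at least $6$ carry no $Y$-$Y$ chord of the right parity and must be pushed back into $G$, where chordal bipartiteness is exactly the tool that produces an odd ($X$-$Y$) chord. The slightly fiddly, though entirely elementary, step is the counting that rules out the degenerate "all opposite-parity pairs consecutive" configurations in the first case.
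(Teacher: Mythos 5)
Your proof is correct and takes essentially the same route as the paper's: both handle the split and chordal parts immediately, and for strong chordality both split even cycles into those alternating between $X$ and $Y$ (your ``one parity'' case, which is pushed back into $G$ and handled by chordal bipartiteness, the chord being odd by bipartiteness) and those containing a $Y$-$Y$ edge (your ``both parities'' case), where the clique on $Y$ supplies an odd chord unless the $Y$-vertices on the cycle are too clustered, a degenerate configuration ruled out by the independence of $X$ forcing the cycle to be short. The differences are cosmetic: your parity framing versus the paper's choice of a specific $Y$-$Y$ edge $c_1c_2$ and a third $Y$-vertex $c_k$, and your citing the standard fact that split graphs are chordal rather than proving it directly.
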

We remark that Lemma~\ref{lem:stronglychordalsplit} was already observed in \cite{Muller96a} without proof.

\begin{proof}
 We first show that $G'$ is a split graph. To see this, observe that by construction $Y$ induces a clique in $G'$ and $X$ induces an independent set, because $G$ is bipartite.
 
 We now show that $G'$ is strongly chordal. First observe that it is chordal, because every cycle of length at least $4$ must contain two vertices in $Y$ that are not neighbors on the cycle. But these vertices are then connected by an edge that was added in the construction of $G'$ from $G$, which gives us the desired chord.
 
 Now consider an even cycle $C$ of length at least $6$ in $G'$. Let $S$ be the set of edges added in the construction of $G'$ from $G$. We consider two cases depending on if $C$ contains edges from $S$ and show that in both cases $C$ has an odd chord and thus $G'$ is strongly chordal.
 
 First assume that $C$ contains no edges from $S$. Thus $C$ is also a cycle in $G$ and has a chord $e$ because $G$ is chordal bipartite. But since $G$ is bipartite, the chord $e$ is odd.
 
 Now assume that $C=c_1c_2\ldots c_m$ contains an edge from $S$, i.e., an edge between two vertices in $Y$, say $c_1c_2$. If there is a vertex $c_k$ in $Y$ where $4\le k\le m-1$, then both $c_1$ and $c_2$ are adjacent to $c_k$. Since $c_1$ and $c_2$ are adjacent on $C$, one of the chords $c_1c_k$ or $c_2c_k$ must be odd and we are done. Thus, we may assume that $V(C)\cap Y\subseteq \{c_m, c_1, c_2, c_3\}$, and they appear consecutively.
 Thus, $|V(C)|\le 5$ as $X$ is an independent set. This is a contradiction to the assumption that $|V(C)|\ge 6$.
\end{proof}

We now have everything in place to show our first mim-width lower bound.

\begin{theorem}\label{thm:scs}
  There is an infinite family $\mathcal{G}_{scs}$ of strongly chordal split graphs and a constant $c$ such that for every graph $G\in \mathcal G_{scs}$ we have $\mimw(G)\ge c|V(G)|$.
\end{theorem}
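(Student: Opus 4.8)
The plan is to combine the three tools already assembled in the excerpt: the linear lower bound for chordal bipartite graphs (Theorem~\ref{thm:chordalbip}), the edge-addition lemma (Lemma~\ref{lem:addedgeslower}), and the structural Lemma~\ref{lem:stronglychordalsplit} that turns a chordal bipartite graph into a strongly chordal split graph by completing one color class into a clique. The strategy is to start from the family $\mathcal{G}_{cb}$ guaranteed by Theorem~\ref{thm:chordalbip} and apply the construction $G\mapsto G'$ of Section~\ref{sct:split} to each member, then argue that the mim-width survives this transformation up to a constant factor.

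\begin{proof}
Let $\mathcal{G}_{cb}$ be the infinite family of chordal bipartite graphs and $c$ the constant from Theorem~\ref{thm:chordalbip}, so that every $G\in\mathcal{G}_{cb}$ satisfies $\mimw(G)\ge c|V(G)|$. For each such $G$, with color classes $X$ and $Y$, let $G'$ be the graph obtained by adding an edge between every pair of vertices of $Y$, as in the construction preceding Lemma~\ref{lem:stronglychordalsplit}. By Lemma~\ref{lem:stronglychordalsplit}, each $G'$ is a strongly chordal split graph. Let $\mathcal{G}_{scs}$ be the family of all graphs $G'$ arising in this way; since $\mathcal{G}_{cb}$ is infinite and the construction adds no vertices, $\mathcal{G}_{scs}$ is an infinite family of strongly chordal split graphs with $|V(G')|=|V(G)|$.

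It remains to lower-bound the mim-width of each $G'$. Observe that $G'$ is obtained from the bipartite graph $G$ by adding edges whose endpoints all lie inside the color class $Y$; this is exactly the situation of Lemma~\ref{lem:addedgeslower}. Applying that lemma gives
\[
\mimw(G') \ge \tfrac{1}{2}\mimw(G) \ge \tfrac{1}{2}\,c\,|V(G)| = \tfrac{c}{2}\,|V(G')|.
\]
Setting $c' := c/2$, every $G'\in\mathcal{G}_{scs}$ satisfies $\mimw(G')\ge c'|V(G')|$, which establishes the claim with constant $c'$.
\end{proof}

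The main obstacle in designing this argument is not any single step of the final proof—each invocation is a direct application of a result already proved—but rather ensuring that the two lemmas compose cleanly, and this hinges on a hypothesis match that was deliberately arranged earlier. Specifically, Lemma~\ref{lem:addedgeslower} requires that the added edges lie \emph{inside} the color classes of the original bipartite graph, and Lemma~\ref{lem:stronglychordalsplit} constructs $G'$ by adding edges \emph{only} within $Y$; these are precisely compatible, so no extra work is needed. One should verify that the factor of $\tfrac{1}{2}$ lost in Lemma~\ref{lem:addedgeslower} is harmless: since $\mimw(G)$ is already linear in $|V(G)|$, halving it preserves linearity, so the resulting bound remains optimal up to the constant factor. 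The only genuine care required is bookkeeping—confirming that $|V(G')|=|V(G)|$ so the lower bound stays expressed in terms of the vertex count of the graph actually in $\mathcal{G}_{scs}$.
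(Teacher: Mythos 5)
Your proposal is correct and follows exactly the paper's own argument: apply the construction preceding Lemma~\ref{lem:stronglychordalsplit} to each graph of $\mathcal{G}_{cb}$ from Theorem~\ref{thm:chordalbip}, then combine Lemma~\ref{lem:stronglychordalsplit} with Lemma~\ref{lem:addedgeslower} to retain the linear lower bound up to the factor $\tfrac{1}{2}$. Your write-up is in fact slightly more explicit than the paper's, spelling out the constant $c'=c/2$ and the vertex-count bookkeeping that the paper leaves implicit.
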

\begin{proof}
 Consider the family $\mathcal{G}_{cb}$ from Theorem~\ref{thm:chordalbip} and compute for every graph~$G$ in it the graph $G'$ as above. The union of the $G'$ is the family~$\mathcal{G}_{scs}$. By Lemma~\ref{lem:stronglychordalsplit}, the family $\mathcal{G}_{scs}$ consists of strongly chordal split graphs. Lemma~\ref{lem:addedgeslower} shows that in the construction of $G'$ from $G$, the mim-width only decreases by a constant factor, which directly yields the lower bound for $\mathcal{G}_{scs}$ from that for $\mathcal{G}_{cb}$.
\end{proof}

We remark that Theorem~\ref{thm:scs} answers negatively half of Open Questions~9 from~\cite{vatshelleThesis} which asks if the mim-width of strongly chordal graphs and tolerance graphs is bounded by a constant.

\section{Co-Comparability Graphs}

A graph G is a \emph{comparability graph} if it is transitively orientable, i.e., its edges can be directed such that if $ab$ and $bc$ are directed edges, then $ac$ is a directed edge as well. Clearly, all bipartite graphs are comparability graphs: just choose one color class and direct all edges from this color class to the other one. Since there are no directed paths containing more than one edge in the resulting digraph, the transitivity condition is trivially true. A graph is called a co-comparability graph if it is the complement of a comparability graph.

For every bipartite graph $G$ with color classes $X$ and $Y$, we construct a graph~$G'$ by adding edges between all pairs of vertices that lie in the same color class.

\begin{observation}
 The graph $G'$ is a co-comparability graph.
\end{observation}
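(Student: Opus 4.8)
The plan is to show directly that the complement $\overline{G'}$ is a comparability graph, since by definition this is exactly what it means for $G'$ to be a co-comparability graph. First I would examine the edge structure of $\overline{G'}$. By construction of $G'$, the set $X$ induces a clique, so no two vertices of $X$ are adjacent in $\overline{G'}$; the same reasoning applies to $Y$. Hence every edge of $\overline{G'}$ has exactly one endpoint in $X$ and one endpoint in $Y$, which means that $\overline{G'}$ is itself a bipartite graph with color classes $X$ and $Y$.

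Once this is established, the conclusion follows immediately from the remark already made at the start of this section: every bipartite graph is a comparability graph, because one may orient all of its edges from one fixed color class toward the other, and the resulting digraph contains no directed path of length two, so the transitivity condition holds vacuously. Applying this to $\overline{G'}$ shows that $\overline{G'}$ is a comparability graph, and therefore $G'$, being its complement, is a co-comparability graph.

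I do not expect any real obstacle in this argument; the only point that deserves care is checking that the complement is taken over the entire vertex set $X \cup Y$, so that the intra-class edges added in passing from $G$ to $G'$ are precisely the edges that vanish in $\overline{G'}$. Notably, the specific cross-class adjacencies inherited from the original bipartite graph $G$ play no role: whatever they are, $\overline{G'}$ remains bipartite with the stated color classes, and that is all the argument requires.
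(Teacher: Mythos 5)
Your proof is correct and follows essentially the same route as the paper: observe that $X$ and $Y$ become independent sets in $\overline{G'}$, so $\overline{G'}$ is bipartite and hence a comparability graph (by the orientation argument already noted in the section), which makes $G'$ a co-comparability graph. No gaps to report.
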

\begin{proof}
 Since all pairs of vertices in $X$, respectively $Y$, are connected in $G'$, in the complement graph $\bar{G'}$ both $X$ and $Y$ induce independent sets. Thus $\bar{G'}$ is bipartite and consequently a comparability graph. It follows that $G'$ is a co-comparability graph.
\end{proof}

The following theorem can now be proved with essentially the same proof as Theorem~\ref{thm:scs}.

\begin{theorem}\label{thm:cc}
  There is an infinite family $\mathcal{G}_{cc}$ of co-comparability graphs and a constant $c$ such that for every graph $G\in \mathcal G_{cc}$ we have $\mimw(G)\ge c|V(G)|$.
\end{theorem}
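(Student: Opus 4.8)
The plan is to reuse the structure of the proof of Theorem~\ref{thm:scs} almost verbatim, since the same template of ``start from a hard bipartite family, add intra-color-class edges, invoke Lemma~\ref{lem:addedgeslower}'' applies. Concretely, I would start again from the family $\mathcal{G}_{cb}$ of chordal bipartite graphs provided by Theorem~\ref{thm:chordalbip}, each with its color classes $X$ and $Y$. For every such $G$ I would form $G'$ by adding all missing edges between pairs of vertices lying in the same color class, exactly the construction preceding the Observation above. The union of these graphs $G'$ over all $G\in\mathcal{G}_{cb}$ is then the candidate family $\mathcal{G}_{cc}$.

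The first thing to verify is that every $G'$ so obtained lies in the target class, namely that it is a co-comparability graph. This is precisely the content of the Observation immediately preceding the theorem, so I would simply cite it: since both $X$ and $Y$ induce cliques in $G'$, they induce independent sets in the complement $\bar{G'}$, making $\bar{G'}$ bipartite and hence a comparability graph, so $G'$ is co-comparability. Thus $\mathcal{G}_{cc}$ is indeed an infinite family of co-comparability graphs, and it is infinite because $\mathcal{G}_{cb}$ is.

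The second and final ingredient is the lower bound itself. Here I would observe that the edges added in passing from $G$ to $G'$ all have both endpoints in the same color class of the original bipartite graph $G$, which is exactly the hypothesis of Lemma~\ref{lem:addedgeslower}. Applying that lemma gives $\mimw(G')\ge \tfrac{1}{2}\mimw(G)$. Combining this with the bound $\mimw(G)\ge c_0|V(G)|$ from Theorem~\ref{thm:chordalbip}, and noting that $|V(G')|=|V(G)|$ since the construction adds no vertices, yields $\mimw(G')\ge \tfrac{c_0}{2}|V(G')|$. Setting $c=c_0/2$ establishes the claimed linear lower bound for every member of $\mathcal{G}_{cc}$.

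I do not anticipate any genuine obstacle: the only substantive difference from Theorem~\ref{thm:scs} is that here we add edges inside \emph{both} color classes rather than only inside $Y$, but Lemma~\ref{lem:addedgeslower} is already stated to permit added edges with both endpoints in $X$ or both in $Y$, so its hypothesis is met without modification. The one point worth stating explicitly, rather than leaving implicit, is that the vertex count is preserved under the construction, so that a linear bound in $|V(G)|$ transfers to a linear bound in $|V(G')|$; this is immediate but is the step that makes the constant $c$ come out cleanly.
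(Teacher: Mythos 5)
Your proposal is correct and is exactly the argument the paper intends: the paper itself only remarks that Theorem~\ref{thm:cc} follows ``with essentially the same proof as Theorem~\ref{thm:scs},'' i.e.\ start from $\mathcal{G}_{cb}$, add all intra-color-class edges, invoke the Observation for membership in the class of co-comparability graphs, and apply Lemma~\ref{lem:addedgeslower} together with Theorem~\ref{thm:chordalbip}. Your explicit remarks that Lemma~\ref{lem:addedgeslower} tolerates edges added inside both color classes and that the construction preserves the vertex count are exactly the details the paper leaves implicit.
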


With Theorem~\ref{thm:chordalbip} and Theorem~\ref{thm:cc}, we know that there is an infinite family of comparability graphs and an infinite family of co-comparability graphs that both have high mim-width. Let us remark that in contrast to this all families of graphs that are comparability graphs \emph{and} co-comparability graphs have constant mim-width. Indeed, such graphs are circular permutation graphs (see~\cite{isgci}) which by~\cite{BelmonteV13} have mim-width at most $2$.

\section{Circle Graphs}

In this section, we will show that the class of circle graphs has high mim-width, using a construction that is heavily inspired by that of~\cite{Damaschke89}. A graph $G$ is called a circle graph if it is the intersection graph of chords in a circle, i.e., it has the following representation: the vertices are chords in a circle and two vertices are connected by an edge if and only if these chords intersect.

We will start with the following lemma.

\begin{lemma}\label{lem:circhelp}
 There is an infinite family $\mathcal G$ of bipartite graphs and a constant $c>0$ such that the following statements are true:
 \begin{itemize}
  \item[1)] Let $G$ in $\mathcal G$ have color classes $X$ and $Y$. Then all vertices in $X$ have degree $3$ while all vertices in $Y$ have degree $2$.
  \item[2)] Every graph $G$ in $\mathcal G$ has $\mimw(G)\ge c|V(G)|$.
 \end{itemize}
\end{lemma}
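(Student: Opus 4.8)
The plan is to realize the prescribed degree profile by taking \emph{incidence graphs of cubic graphs}. Given a simple $3$-regular graph $H=(V,E)$, let $G=I(H)$ be the bipartite graph with color classes $X=V$ and $Y=E$, where $v\in X$ and $e\in Y$ are adjacent exactly when $v$ is an endpoint of $e$ in $H$. Then $G$ is bipartite, every $v\in X$ has degree $3$ (its three incident edges of $H$), and every $e\in Y$ has degree $2$ (its two endpoints), so part~$1)$ holds by construction. I would take $\mathcal G=\{I(H)\}$ as $H$ ranges over an infinite family of cubic graphs whose edge-expansion is bounded below by a fixed constant $h>0$; such families are well known to exist (random cubic graphs, or explicit cubic expanders). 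Since $|V(G)|=|V|+|E|=\tfrac52|V|$, a lower bound linear in $|V|$ is the same as one linear in $|V(G)|$.

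For the lower bound I would \emph{not} route through Theorem~\ref{thm:chordalbip}, but instead exploit the expansion of $H$ directly. Fix any branch decomposition $T$ of $G$ and, tracking only the leaves lying in $X=V$, descend from the root to the larger child until reaching the first node $t$ whose leaf set $A$ satisfies $\tfrac13|V|\le|A\cap V|\le\tfrac23|V|$; a standard counting argument shows such a node exists. Writing $S=A\cap V$, both $S$ and $V\setminus S$ have size at least $|V|/3$, so expansion gives at least $h|V|/3$ edges of $H$ crossing between $S$ and $V\setminus S$. Each such crossing edge $e=\{u,w\}$ with $u\in S$, $w\notin S$ yields a crossing edge of the cut $G[A,\bar A]$: the edge-vertex $e$ lies in $A$ or in $\bar A$, and since $u\in A$ and $w\in\bar A$, at least one of the $G$-edges $\{e,u\}$, $\{e,w\}$ joins $A$ to $\bar A$. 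Distinct edges of $H$ give crossing $G$-edges incident to distinct edge-vertices, hence distinct, so $G[A,\bar A]$ has $\Omega(|V|)$ edges.

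It remains to convert these $\Omega(|V|)$ cut edges into a large \emph{induced} matching, and here the bounded degree of $G$ (maximum degree $3$) is essential. I would build $M'$ greedily: repeatedly add a cut edge $\{a,b\}$ to $M'$ and then discard every candidate cut edge incident to $\{a,b\}$ or to a neighbor of $a$ or $b$ in $G[A,\bar A]$. Each endpoint has at most three neighbors in the cut, so each step removes only a constant number of candidates; the resulting $M'$ is by construction an induced matching of $G[A,\bar A]$ of size $\Omega(|V|)$. Thus every branch decomposition has a node of mim-value $\Omega(|V|)=\Omega(|V(G)|)$, which gives $\mimw(G)\ge c|V(G)|$. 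The main obstacle is exactly this last extraction: a large cut does not by itself force a large mim-value (a star has arbitrarily many cut edges but induced matching number $1$), so the argument genuinely relies on combining the linear lower bound on the cut size with the degree bound. An alternative, if starting from cubic expanders is undesirable, would be to take a bounded-degree bipartite family of linear mim-width derived from Theorem~\ref{thm:chordalbip} and reduce degrees by subdivision, but verifying that subdivision preserves linear mim-width demands essentially the same cut-counting effort.
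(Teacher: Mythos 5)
Your proof is correct, and it is worth noting that your construction actually coincides with the paper's: the incidence graph $I(H)$ of a cubic graph $H$ is exactly the graph obtained from $H$ by subdividing every edge once, which is what the paper builds. The genuine difference lies in how part 2) is established. The paper starts from cubic graphs of \emph{linear treewidth} (citing Grohe and Marx), observes that the subdivided graphs are $2$-degenerate, and invokes two lemmas from Vatshelle's thesis giving $\mimw(G)\ge \tw(G)/(3(d+1))$ for $d$-degenerate graphs, so the lower bound follows from treewidth being preserved under subdivision. You instead start from cubic \emph{expanders} and argue from first principles: every branch decomposition contains a node whose leaf set $A$ is balanced with respect to $X=V(H)$ (the standard descent-to-the-larger-child argument), edge-expansion then forces $\Omega(|V(H)|)$ edges across $G[A,\bar A]$ (each crossing edge of $H$ contributes a crossing edge of $G$ at its private subdivision vertex, so these cut edges are distinct), and since $G$ has maximum degree $3$, your greedy selection discards only a constant number of candidate cut edges per chosen edge (at most $18$, since the closed neighborhood of a chosen edge has at most $6$ vertices, each of degree at most $3$ in the cut graph), leaving an induced matching of size $\Omega(|V(H)|)=\Omega(|V(G)|)$. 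All steps check out, including the inducedness of the greedy matching. As for what each approach buys: the paper's proof is shorter because it delegates the work to known results, and it applies to any linear-treewidth cubic family, not only expanders; yours is self-contained (modulo the existence of cubic expander families), avoids the treewidth-to-mim-width machinery entirely, and isolates the useful general principle that bounded degree together with linearly large balanced cuts forces linear mim-width --- and you correctly flag that the degree bound is indispensable there, since a large cut alone (a star) forces nothing.
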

\begin{proof}
 Let $\mathcal G^*$ be an infinite family of $3$-regular graphs such that there is a constant $c'$ such that $\tw(G)\ge c'|V(G)|$ for every graph~$G$ from $\mathcal G^*$. Such families of graphs were shown to exist in~\cite{GroheM09}. For $G$ from $\mathcal G^*$ we construct~$G'$ by substituting every edge by a path of length $2$. We define $\mathcal G$ to be the union of all the $G'$.
 
 By construction, the graphs in $\mathcal G$ satisfy property 1), so it only remains to show property 2). Remember that a graph is called \emph{$d$-degenerate} if every subgraph has a vertex of degree at most $d$. Note that $G'$ as constructed above is $2$-degenerate. Combining Lemma 4.2.4 and Lemma 4.3.9 from~\cite{vatshelleThesis}, we get that for every $d$-degenerate graph we have 
 \begin{align} \mimw(G) \ge \frac{\tw(G)}{3(d+1)}.\end{align}
 
 Remember that the graphs in~$\mathcal G^*$ have treewidth linear in the number of vertices. Since for all graphs in $\mathcal G^*$ the number of edges and thus the number of vertices of all graphs in $\mathcal G$ is linear in the number of vertices of the graphs in $\mathcal G^*$ that were used in the construction of $\mathcal G$, we get that there is a constant~$c$ satisfying property~2).
\end{proof}

We now show that circle graphs have high mim-width.

\begin{theorem}\label{thm:circle}
 There is an infinite family $\mathcal{G}_{cir}$ of circle graphs and a constant $c$ such that for every graph $G\in \mathcal G_{cir}$ we have $\mimw(G)\ge c|V(G)|$. 
\end{theorem}
\begin{proof}
 Let $\mathcal G$ be the class of graphs in Lemma~\ref{lem:circhelp}. For every $G$ in $\mathcal G$ we construct a circle graph $G'$ as follows: For every vertex in $X$, we put a chord into a circle which we call an $X$-chord. We do this in such a way that all X-chords have the same length and no pair of them intersect. Note that by doing so for every $X$-chord one of the arcs defined by its end points contains no endpoint of any of the other $X$-chords. Now for every vertex in $Y$ we add a chord in such a way that it intersects the $X$-chords of its neighbors and no other $X$-chord. This completes the construction of $G'$. Note that the construction of $G'$ is not deterministic as there are in general several concrete ways to put the chord into the circle, but for our purposes any graph constructed this way will do, so for every $G$ from $\mathcal G$ we arbitrarily pick one such graph and define it to be $G'$. We define $\mathcal{G}_{cir}$ to be the union of all $G'$ constructed this way.
 
 By construction, $\mathcal{G}_{cir}$ consists of circle graphs, so it only remains to show the lower bound on the mim-width. To this end, observe that $G$ is a subgraph of $G'$ that we get by deleting all edges that connect pairs of vertices in $Y$. Said differently, $G'$ can be constructed from $G$ by adding some edges between vertices in $Y$. Lemma~\ref{lem:circhelp} and Lemma~\ref{lem:addedgeslower} directly yield the desired lower bound.
\end{proof}

\section{Conclusion}

We showed a mim-width lower bound linear in the number of vertices for several well known graph classes. This strengthens conditional lower bounds of~\cite{vatshelleThesis} and recent bounds from~\cite{KangKST16}. We also answered half of an open question by Vatshelle, showing that strongly chordal graphs do not have bounded mim-width. It would be interesting to answer the other half of this question either by showing that tolerance graphs have bounded mim-width or by constructing tolerance graphs of high mim-width.

As we have seen, Lemma~\ref{lem:addedgeslower} is quite a flexible tool in proving mim-width lower bounds. We are confident that it or similar arguments can be used to show more lower bounds in a similar way.

Finally, it would be interesting to perform a similar study as in this paper for the very recently introduced notion of sim-width, a refinement of mim-width~\cite{KangKST16}. Since split graphs have sim-width $1$ but linear mim-width, the gap between the two measures is essentially as big as it could potentially be. Can one better understand the relation of sim-width and mim-width for other graph classes? Note that by the results of~\cite{KangKST16}, this would probably require a fine understanding of induced minors isomorphic to so-called t-matching complete graphs.

\paragraph*{Acknowledgements}
The author would like to thank Martin Vatshelle for introducing him to the questions considered in this paper and for helpful discussions. Moreover, the author is grateful for numerous corrections to the paper given by the anonymous reviewers and a substantial simplification of the proof of Lemma~\ref{lem:stronglychordalsplit} proposed by one of the reviewers.

\bibliographystyle{alpha}
\bibliography{mimwidth}
\end{document}